\newtheorem{proposition}{\bf Proposition}[section]
\newtheorem{remark}{\bf Remark}[section]
\DeclareMathOperator{\diver}{div}
\DeclareMathOperator{\grad}{grad}
\DeclareMathOperator{\Diver}{Div}
\DeclareMathOperator{\Grad}{Grad}
\DeclareMathOperator{\sym}{Sym}
\DeclareMathOperator{\lin}{Lin}
\DeclareMathOperator{\orth}{Orth}
\DeclareMathOperator{\tr}{tr}
\DeclareMathOperator{\diag}{diag}
\pgfplotsset{/pgf/number format/use comma,compat=newest}
\renewcommand\epsilon{\varepsilon}
\newcommand{\vect}[1]{\boldsymbol{\mathbf{#1}}}
\newcommand{\tens}[1]{\mathsf{#1}}
\newcommand{\sistema}[1]{\left\{\begin{aligned}#1\end{aligned}\right.}
\renewcommand{\d}{\mathrm{d}}
\newcommand{\F}{\tens{F}}
\newcommand{\B}{\tens{B}}
\newcommand{\C}{\tens{C}}
\newcommand{\Fe}{\tens{F}_\text{e}}
\newcommand{\Be}{\tens{B}_\text{e}}
\newcommand{\Ce}{\tens{C}_\text{e}}
\newcommand{\oF}{\overline{\F}}
\newcommand{\Sigmad}{\tens{\Upsigma}_\text{d}}
\newcommand{\Sigmav}{\tens{\Upsigma}_\text{v}}
\begin{document}
\title{\textsc{Modelling of initially stressed solids: structure of the energy density in the incompressible limit}}

\author{\textsc{M. Magri}\thanks{\href{mailto:marco.magri@polimi.it}{\texttt{marco.magri@polimi.it}}} $\,\,\cdot$
\textsc{D. Riccobelli}\thanks{\href{mailto:davide.riccobelli@polimi.it}{\texttt{davide.riccobelli@polimi.it}}}\bigskip\\
\normalsize MOX -- Dipartimento di Matematica, Politecnico di Milano,\\
\normalsize piazza Leonardo da Vinci 32, Milano, Italy}
\date{\today}

\maketitle

\begin{abstract}
This study addresses the modelling of elastic bodies, particularly when the relaxed configuration is unknown or non-existent. We adopt the theory of initially stressed materials, incorporating the deformation gradient and stress state of the reference configuration (initial stress tensor) into the response function. We show that for the theory to be applicable, the response function of the relaxed material is invertible up to an element of the material symmetry group. Additionally, we establish that commonly imposed constitutive restrictions, namely the initial stress compatibility condition and initial stress reference independence, naturally arise when assuming the initial stress is generated solely from elastic distortion. The paper delves into modelling aspects concerning incompressible materials, showcasing the expressibility of strain energy density as a function of the deviatoric part of the initial stress tensor and the isochoric part of the deformation gradient. This not only reduces the number of independent invariants in the energy functional, but also enhances numerical robustness in finite element simulations. The findings of this research hold significant implications for modelling materials with initial stress, extending potential applications to areas such as mechanobiology, soft robotics, and 4D printing.
\end{abstract}

\section{Introduction}
In the modelling of solid materials, the elastic strain is typically defined with respect to a chosen configuration known as the reference configuration. While it is commonly selected as a configuration free of mechanical stress, certain situations necessitate adopting a reference configuration that exhibits a non-zero stress state. The stress distribution in this reference configuration is called  \emph{initial stress}.

For example, biological tissues are frequently subject to external mechanical force (e.g. the blood pressure on vessel walls and on the heart chambers) and their stress-free configuration is not easily observable \cite{barnafi2024reconstructing}.
An initial stress generated by the imposition of external loads is also known as \textit{pre-stress} \cite{10.1063/1.4704566}. Pre-stress magnitude and distribution can be designed to enhance the mechanical response of materials, as seen in pre-stressed concrete structures \cite{concrete}.

Moreover, materials may store mechanical stress even in the absence of external forces. In such a case the stress state is referred to as \textit{residual stress} \cite{Hoger1985}. In general, residual stresses are generated by geometrical incompatibilities at the microstructural level that necessitate the introduction of elastic distortions to maintain the continuity of the body \cite{goriely2017mathematics}.
For instance, the formation of residual stress in biological materials is driven by active processes, e.g. growth, that generate the incompatibility \cite{doi:10.1073/pnas.1213353109,Ciarletta2016}. Residual stresses may also arise in engineering materials, in this case as a result of their manufacturing process \cite{doi:10.1177/002199839302701402,ZOBEIRY201543,koss1971thermally,moghimi2017residual,damioli2022transient}. 

Regardless of its origin, a suitable design of the initial stress distribution can be exploited to enhance the resistance of the material to the deformation, e.g. in arteries \cite{chuong1986residual,holzapfel2010modelling}, or to control the shape morphing of the object \cite{ciarletta2016morphology, riccobelli2018shape}, with possible applications to soft robotics \cite{alameen2023mechanics} and 4D printing \cite{sydney2016biomimetic,zurlo2017printing}.

A successful approach to modelling initially stressed material is based on the introduction of a \textit{virtual relaxed state} of the body \cite{rodriguez1994stress}. The underlying idea is based on a multiplicative decomposition of the deformation gradient $\tens{F}=\tens{F}_\text{e}\tens{G}$, where $\tens{G}$ maps the reference configuration to a relaxed state while $\tens{F}_\text{e}$ accounts for the local elastic distortion. In general, $\tens{G}$ may not be the gradient of a deformation. This introduces incompatibilities that generate the initial stress within the body. Despite the huge success of this theory, the main drawback is that $\tens{G}$ must be constitutively provided \cite{Epstein_2012,goriely2017mathematics}. Experimental measurements of $\tens{G}$ usually rely on cutting the body to release the mechanical stress and reveal its relaxed state. Clearly, such a destructive technique is not applicable in several scenarios, such as the in-vivo estimation of the stress state of a biological tissue.

However, recent scientific developments have made it possible to measure the initial stress state of an object using non-destructive methods, such as acoustoelastic techniques \cite{li2020ultrasonic,zhang2023noninvasive}.
Based on these progresses, it is more effective to describe the material behaviour as a function of the initial stress tensor and of the deformation gradient. Such an approach is known as the \emph{theory of initially stressed materials}. The first attempts in this direction date back to the seminal works of Hoger and colleagues \cite{Hoger1986,Hoger1993,Hoger1996,Hoger1993b,Johnson1993,Johnson1995,Johnson1998}.
Further extensions have been proposed in recent years \cite{Saravanan_2008,Shams_2011,MERODIO201343,Shams_Ogden_2012,NAM201688,du2018modified,huang2021mathematical,gower2024elastic}. Specifically, Shams et al. \cite{Shams_2011} proposed a constitutive law for a hyperelastic and isotropic solid based on ten invariants involving the elastic deformation and the initial stress tensor.

More recently, several constitutive restrictions have been proposed to guide the construction of physically admissible energy functions, such as the \textit{initial stress compatibility} and the \textit{initial stress reference independence} conditions \cite{Gower_2015,Gower_2017}. Despite these advances, a unified theory for constitutive modelling of initially stressed solids remains a debated topic \cite{Ogden_2023}. The approach proposed by Gower et al.  \cite{Gower_2015} enables the derivation of a general constitutive law that holds for any reference configuration independently of the initial stress. On the contrary, other constitutive laws that do not satisfy these constraints are inherently tied to a specific reference configuration, with material properties potentially depending on the residual stress distribution \cite{Riccobelli_2019}.\\
Furthermore, the finite element simulations of incompressible initially stressed solids exhibit numerical issues, as highlighted, for example, in \cite{ciarletta2016morphology}. These problems are related to locking phenomena that frequently affect the simulation of incompressible elastic media (see chapter 15 of \cite{de2011computational}). A common approach to circumvent these drawbacks in conventional hyperelasticity is to exploit a decomposition of the stress tensors into a volumetric and a deviatoric part \cite{bonet1997nonlinear,holzapfel2000nonlinear}.

Based on the aforementioned aspects, the purpose of this paper is two-fold:
\begin{itemize}
    \item derive a comprehensive theory of initially stressed materials where the initial stress is generated solely by an elastic distortion,
    \item enhance current description of incompressible initially stressed media exploiting the volumetric-deviatoric splitting.
\end{itemize}
The paper is organised as follows: in Section~\ref{sec:init-stress}, we review the basic theory of initially stressed materials, deriving it starting by solely assuming that the initial stress is generated by an elastic distortion. Specifically, we show that commonly enforced restrictions naturally follow from this assumption.
In Section~\ref{sec:structure}, we provide a representation formula for the strain energy of an isotropic initially stressed material. In Section~\ref{sec:incompressible} we study the incompressible limit of the proposed theoretical framework. As an application, the bending of an initially stressed block is analysed in Section~\ref{sec:bending}.

\section{Initially stressed materials}
\label{sec:init-stress}
In the following, we introduce the essential notation and theory to describe the response of initially stressed materials by exploiting the theoretical framework developed by Shams et al. \cite{Shams_2011}.

We consider a reference configuration $\Omega_0\subset\mathbb{E}^3$, where $\mathbb{E}^3$ denotes the three-dimensional Euclidean space. Let $\vect{X}\in \Omega_0$ be a point of the reference configuration. The deformation field is denoted by the map $\vect{\chi}:\Omega_0\rightarrow\mathbb{E}^3$, with the current position vector given by $\vect{x} = \vect{\chi}(\vect{X})$. The deformation gradient tensor is denoted by $\tens{F} = \Grad\vect{\chi}$.

Let $\tens{T}$ and $\psi$ be the Cauchy stress tensor and the strain energy density per unit reference volume, respectively. We recall that the first Piola-Kirchhoff stress tensor $\tens{S}$ is related to $\tens{T}$ by the formula

\begin{equation}
    \label{eq:Cauchy}
    \tens{T} = J^{-1}\tens{S}\tens{F}^T,
\end{equation}
where $J=\det\tens{F}$.

When modelling initially stressed materials, it is usually assumed that both the strain energy density and the Cauchy stress tensor depend on the initial stress tensor $\tens{\Sigma}$ and the deformation gradient, that is
\begin{equation}
    \label{eq:psiFSigma}
    \psi = \widehat{\psi}(\tens{F},\, \tens{\Sigma}) \quad \text{and} \quad \tens{T} = \widehat{\tens{T}}(\tens{F},\,\tens{\Sigma}).
\end{equation}
Before proceeding into this direction, we present a simple one-dimensional example to illustrate the procedure.

\subsection{An illustrative example: an initially stressed elastic spring}
\label{sec:0d}
We consider an elastic spring in its relaxed state. In this configuration, its length $L$ coincides with its rest length. We can write its constitutive response out of this stress free state as simply as
\begin{equation}
    \label{eq:spring}
    F=f(\lambda) \, ,
\end{equation}
where $F$ is the elastic force generated by the spring, $\lambda=\ell/L$ is the stretch, and $\ell$ is the current length of the spring (see Fig.~\ref{fig:molle}). We further assume $f$ to be invertible.

We now consider a different configuration of the spring with length $L_1 \neq L$. According to \eqref{eq:spring}, an elastic force $\tau_1=f(e_1)$, with $e_1=L_1/L$, is associated with this configuration. Our aim is to describe the response of the spring out of this stressed configuration. Accordingly, we shall introduce a constitutive function $f_1$ that depends exclusively on the stretch $\lambda_1=\ell/L_1$ and on the force $\tau_1$. Since $\lambda = e_1\lambda_1=f^{-1}(\tau_1)\lambda_1$, we have

\begin{equation}
    \label{eq:f1}
    F = f_1(\lambda_1,\,\tau_1) =f(f^{-1}(\tau_1)\lambda_1).
\end{equation}
Similarly, we can take another configuration with length $L_2 \neq L_1$. By following the same procedure used above, we can build a constitutive function $f_2$ out of this configuration as follows
\begin{equation}
    \label{eq:f2}
    F=f_2(\lambda_2,\,\tau_2) = f(f^{-1}(\tau_2)\lambda_2).
\end{equation}
where $\tau_2=f(e_2)$, with $e_2=L_2/L$, and the stretch $\lambda_2=\ell/L_2$.
\begin{figure}[t!]
    \centering
    \includegraphics[width=0.8\textwidth]{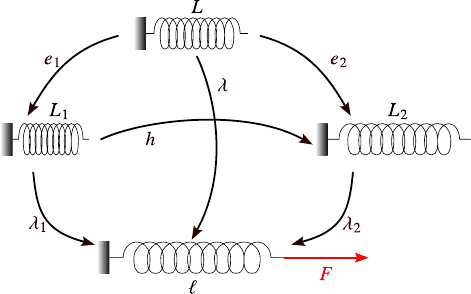}
    \caption{A scheme of the example of section~\ref{sec:0d}.}
    \label{fig:molle}
\end{figure}

At this point, we observe that $f_1$ and $f_2$ defined in \eqref{eq:f1} and \eqref{eq:f2} are merely the same response function that is evaluated for different values. To support this, we introduce $h=L_2/L_1=\lambda_1\lambda_2^{-1}$, so it can be easily shown that
\[
    F = f_1(\lambda_2 h,\tau_1) = f_2 (\lambda_2,\,\tau_2) =f_1 (\lambda_2,\,\tau_2) \, .
\]

Moreover, since $\tau_2 = f_1(h,\,\tau_1)$, we get
\begin{equation}
    \label{eq:ISRI1D}
    f_1(\lambda_2 h,\tau_1)=f_1 (\lambda_2,\,f_1(h,\,\tau_1)) \, .
\end{equation}
This last equation is the one-dimensional equivalent of the \textit{initial stress reference independence} condition proposed in \cite{Gower_2017} for initially stressed materials. A key element in this derivation is the assumption of a common relaxed state of the initially stressed spring. Although such an example is an over-simplification of a generic multidimensional problem, it enlightens two key assumptions:
\begin{enumerate}
    \item the invertibility of the response function $f$;
    \item the fact that the initial stress state is generated by an elastic distortion.
\end{enumerate}
We now extend this considerations to a three-dimensional elastic body with initial stress to study the general structure of the problem.

\subsection{Initially stressed three-dimensional elastic bodies: fundamental assumptions} \label{sec:assumptions}

To derive a general theory for initially stressed three-dimensional bodies, we consider a natural material with a purely hyperelastic response. The following notation will be used thereafter

\begin{itemize}
    \item $\lin$ denotes the set of linear second order tensors.
    \item $\lin^+$ denotes the set of all the $\tens{F}\in \lin$ such that $\det\tens{F}>0$.
    \item $\sym$ is the set of all the symmetric second order tensors.
    \item $\orth$ and $\orth^+$ are the sets of the orthogonal matrices and of rotations, respectively.
\end{itemize}

\subsubsection{Invertibility of the response function}
\label{sec:ass_1}

As pursued in Sec. \ref{sec:0d},  we start by considering a constitutive response of a hyperelastic material free of initial stresses i.e. we take a response function $\tens{T}_0(\Fe)$, where $\Fe\in\lin^+$ is the elastic distortion measured with respect to a stress-free state, while  $\tens{T}_0$ is the Cauchy stress tensor. Moreover, we denote by $\psi_0(\Fe)$ the strain energy density associated with $\tens{T}_0(\Fe)$.

For a given stress state $\tens{\Sigma}\in\sym$, we would like to identify a distortion $\tens{P}_\tens{\Sigma}$ such that $\tens{T}_0(\tens{P}_\tens{\Sigma})=\tens{\Sigma}$. The existence of such a distortion $\tens{P}_\tens{\Sigma}$ has been addressed in \cite{Riccobelli_2019} provided that the strain energy density is non degenerate, that is

\[
    \left\{
    \begin{aligned}
         & \psi_0(\Fe)\rightarrow+\infty\qquad\text{as }|\Fe|+|\Fe^{-1}|+\det\Fe\rightarrow+\infty, \\
         & \psi_0(\Fe)\rightarrow+\infty\qquad\text{as }\det\Fe\rightarrow 0.
    \end{aligned}
    \right.
\]
Such an assumption requires the strain energy density to blow up when the body is locally subject to an extreme deformation (e.g. one of the principal stretches goes to infinity). However, the simple existence of $\tens{P}_\tens{\Sigma}$ is not enough, as we show next.

Compared with the illustrative example proposed in Sec.~\ref{sec:0d}, we cannot have an exact inverse of $\tens{T}_0$ in this 3D case. Indeed, for any $\tens{Q}\in\mathcal{G}$, where $\mathcal{G}\subset\orth^+$ is the material symmetry group, we have
\[
    \tens{T}_0(\Fe\tens{Q})=\tens{T}_0(\Fe)   \, ,
\]
so that $\tens{T}_0$ is not injective, in general. Nevertheless, it is possible to introduce a relation of equivalence $\sim$ in $\lin^+$, such that
\[
    \Fe^{(1)}\sim\Fe^{(2)} \quad\Leftrightarrow\quad \exists\tens{Q}\in\mathcal{G}\;|\;\Fe^{(1)} = \Fe^{(2)}\tens{Q} \, .
\]

We further define the quotient set $\mathcal{F}=\lin^+/\sim$ composed of the equivalence classes $[\Fe]=\{\tens{A}\in\lin^+\;|\;\tens{A}\sim\Fe\}$ induced by $\sim$, namely
\[
    \mathcal{F} = \{[\Fe],\,\Fe\in\lin^+\} \, .
\]
We can now conveniently construct an alternative response function $\tens{T}_\mathcal{F}$, as follows
\[
    \begin{aligned}
         & \tens{T}_\mathcal{F}:\mathcal{F}\rightarrow\sym, \\
         & \tens{T}_\mathcal{F}([\Fe]):=\tens{T}_0(\Fe),
    \end{aligned}
\]
which is, of course, well-defined since $\tens{T}_0(\tens{A})$ is independent of the choice of $\tens{A}\in [\Fe]$. Differently from $\tens{T}_0$, the newly introduced response function $\tens{T}_\mathcal{F}$ can be bijective under particular circumstances.

\begin{remark}
    \label{rem:invert}
    Requiring the invertibility of the response function over the set $\mathcal{F}$ is restrictive.  Consider, for instance, an incompressible material whose strain energy density is given by
    \[
        \psi_0(\Fe) = \mu(\sqrt{\lambda_1}+\sqrt{\lambda_2}+\sqrt{\lambda_3}-3),
    \]
    Such an energy is representative of material softening and, therefore, does not lead to an injective expression of the principal stresses. We show this fact  by taking a uniaxial deformation in plane strain conditions, such that $\Fe = \diag(\lambda,\,\lambda^{-1},\,1)$. The principal stress along the axial direction can be computed as $t_1(\lambda)=\partial\psi_0(\Fe)/\partial\lambda$, resulting in
    \[
        t_1(\lambda) = \mu\frac{(\lambda -1) }{2 \lambda ^{3/2}},
    \]
    which is clearly non-bijective. \textbf{This behaviour is  linked with the existence of multiple deformations associated with the same stress response, but having different strain energy}. For instance, we have
    \[
        t_1\left(2\right) = t_1\left(3 + \sqrt{5}\right),
    \]
    but
    \[
        \left.\psi_0(\Fe)\right|_{\lambda=2}=\frac{3 \sqrt{2}}{2}\mu<3 \frac{\sqrt{10}+\sqrt{2}}{2} \mu = \left.\psi_0(\Fe)\right|_{\lambda=3 + \sqrt{5}}.
    \]
\end{remark}

In order to construct a robust theory, in the following \emph{we require $\tens{T}_\mathcal{F}$ to be bijective.}

This discussion simplifies a lot if we assume an isotropic material response, that is $\mathcal{G}=\orth^+$.
Accordingly, using the polar decomposition theorem $\Fe=\tens{V}\tens{R}$, with $\tens{V}$ being a symmetric positive definite tensor and $\tens{R}\in\orth^+$, we have
\[
    \tens{T}_0 (\Fe) = \tens{T}_0(\tens{V}).
\]
Here, $\tens{V}$ can be seen as the representative of its equivalence class $[\Fe]$. For isotropic materials, the principal directions of $\tens{T}_0$ and of $\tens{V}$ are the same, therefore
\[
    t_j=f_j(\lambda_1,\,\lambda_2,\,\lambda_3),\quad j\in\{1,\,2,\,3\},
\]
where $t_j$ and $\lambda_j$ are the principal values of $\tens{T}$ and $\tens{V}$, respectively.
Consequently, the requirement that $\tens{T}_\mathcal{F}$ be invertible, here, reduces to the invertibility of the three scalar functions $t_j$.

\subsubsection{Initially stressed configurations}
\label{sec:ass_2}

We are now interested to model the behaviour of an elastic body using a reference configuration which is not stress-free. In particular, the constitutive behaviour of each point will be characterized by two properties:
\begin{itemize}
    \item the initial stress state in that point, indicated by $\tens{\Sigma}$.
    \item the elastic properties of the material. These are encoded in the response function $\tens{T}_0(\Fe)$, which represents the behaviour of the material in the absence of initial stress.
\end{itemize}

In particular, we assume that the initial stress $\tens{\Sigma}$ is generated by an elastic distortion. While this aspect may seem to be restrictive, this hypothesis allows us to model a large variety of processes, such as the pre-stretching of elastic bodies and the residual stress induced by geometrical frustrations. In both cases, such a process can be conceptualised as the application of a local distortion $\tens{P}(\tens{\Sigma})$ to a small neighbourhood of a stress-free material before being placed in the reference configuration $\Omega_0$.

From a mathematical point of view, this requirement can be enforced by requiring that the strain energies $\psi_0(\Fe)$ and $\psi(\tens{F},\,\tens{\Sigma})$ are materially isomorphic \cite{Noll_1967,Epstein_2012,Epstein_2015} or, by using the nomenclature introduced by Epstein, we take $\psi_0(\Fe)$ as the \emph{archetype} of the material \cite{Epstein_2001,Epstein_2002}. Operationally, given an initial stress $\tens{\Sigma}$ at a material point $\vect{X}$, we define its related distortion $\tens{P}$ as
\[
    \tens{P}:\sym\rightarrow\lin^+ \, ,
\]
with
\[
    \boxed{
        \begin{gathered}
            \tens{P}=\mathcal{C}_\mathcal{F}\circ\tens{T}_\mathcal{F}^{-1}\\
            \tens{P}_\tens{\Sigma}=\tens{P}(\tens{\Sigma})
        \end{gathered}
    }
\]
where $\mathcal{C}_\mathcal{F}$ is the choice function that associates to any set $[\tens{P}]\in \mathcal{F}$ one of its elements\footnote{The existence of such a function is guaranteed by the axiom of choice.}. Clearly, by definition
\begin{equation}
    \tens{T}_0(\tens{P}(\tens{\Sigma}))=\tens{\Sigma}.
\end{equation}
Consequently, by the definition of material isomorphism, we get
\begin{equation}
    \label{eq:ISRIpsi}
    \widehat{\psi}(\tens{F},\,\tens{\Sigma}) = \frac{1}{\det\tens{P}(\tens{\Sigma})}\,\psi_0(\tens{F}\tens{P}(\tens{\Sigma})).
\end{equation}
Moreover, by denoting $\tens{F}_\text{e} = \tens{F}\tens{P}_{\tens{\Sigma}}$, we get
\begin{equation}
    \label{eq:TFSigma}
    \widehat{\tens{T}}(\tens{F},\,\tens{\Sigma})=\frac{1}{\det \tens{F}\det\tens{P}(\tens{\Sigma})}\left.\frac{\partial\psi_0}{\partial\tens{F}_\text{e}}\right|_{\tens{F}_\text{e}=\tens{F}\tens{P}(\tens{\Sigma})}\tens{P}(\tens{\Sigma})^{T}\tens{F}^T=\tens{T}_0(\tens{F}\tens{P}(\tens{\Sigma})).
\end{equation}

\subsection{Initially stressed materials: properties of the constitutive law}

From equations \eqref{eq:ISRIpsi} and \eqref{eq:TFSigma}, we can immediately retrieve two constitutive restrictions that are usually enforced as fundamental assumptions in modelling initially stressed materials.

\subsubsection{The initial stress compatibility condition}

By taking $\tens{F}=\tens{I}$ in \eqref{eq:TFSigma} we trivially get
\begin{equation}
    \label{eq:ISCC}
    \widehat{\tens{T}}(\tens{I},\,\tens{\Sigma}) = \tens{T}_0(\tens{P}(\tens{\Sigma})) = \tens{\Sigma}.
\end{equation}

Therefore, this equation must hold for all $\tens{\Sigma}$ belonging to $\sym$. Such a property, is referred to as the \textit{initial stress compatibility condition} \cite{Shams_2011,Gower_2015}.

\subsubsection{The initial stress reference independence}
A much discussed and open issue is related to the properties of \eqref{eq:psiFSigma} while making a change of reference configuration. According to \cite{Gower_2015,Gower_2017}, this operation provides a constitutive restriction called \emph{initial stress reference independence} (ISRI). However, such a restriction has received some criticism, e.g. \cite{Ogden_2023}, mainly because it is seen as a condition that unnecessarily restricts the admissible constitutive laws of initially stressed materials. To clarify this aspect, we show here that ISRI is a mere consequence of the assumptions we made in Section~\ref{sec:assumptions}.

\begin{figure}
    \centering
    \includegraphics[width=0.8\textwidth]{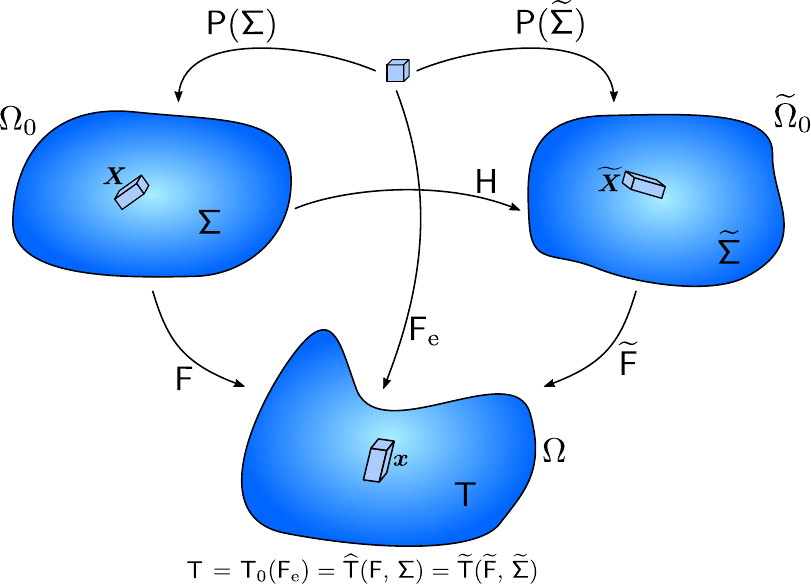}
    \caption{Change of reference configuration for initially stressed materials.}
    \label{fig:ref_conf}
\end{figure}

Let $\widetilde{\Omega}_0$ be a different reference configuration than $\Omega_0$, with $\vect{\xi}:\Omega_0\rightarrow\widetilde{\Omega}_0$ indicating the map between these two reference configurations. We also introduce $\tens{H}=\Grad\vect{\xi}$ and denote with $\widetilde{\tens{F}}$ the gradient of the deformation field $\widetilde{\vect{\chi}}:\widetilde{\Omega}_0\rightarrow \Omega$. In addition,  $\widetilde{\tens{T}}$ refers to the response function at a material point $\widetilde{\vect{X}}=\vect{\xi}(\vect{X})$ (see Fig.~\ref{fig:ref_conf}). By following the same procedure as in Section~\ref{sec:ass_2}, we find
\[
    \widetilde{\tens{T}}(\widetilde{\tens{F}},\,\widetilde{\tens{\Sigma}})=\tens{T}_0(\widetilde{\tens{F}}\tens{P}(\widetilde{\tens{\Sigma}})) = \widehat{\tens{T}}(\widetilde{\tens{F}},\,\widetilde{\tens{\Sigma}}),
\]
so that $\widetilde{\tens{T}}=\widehat{\tens{T}}$, namely $\widetilde{\tens{T}}$ \emph{is independent of the specific choice of the reference configuration}.  By further observing that $\widetilde{\tens{\Sigma}} = \widehat{\tens{T}}(\tens{H},\,\tens{\Sigma})$, we can write
\[
    \widehat{\tens{T}}(\tens{F},\,\tens{\Sigma}) = \widehat{\tens{T}}(\widetilde{\tens{F}},\,\widetilde{\tens{\Sigma}})=\widehat{\tens{T}}(\widetilde{\tens{F}},\,\widehat{\tens{T}}(\tens{H},\,\tens{\Sigma})),
\]
and, by using $\tens{F} = \widetilde{\tens{F}}\tens{H}$ together with the arbitrariness of
$\widetilde{\Omega}_0$, we finally get
\[
    \widehat{\tens{T}}(\widetilde{\tens{F}}\tens{H},\,\tens{\Sigma})=\widehat{\tens{T}}(\widetilde{\tens{F}},\,\widehat{\tens{T}}(\tens{H},\,\tens{\Sigma}))\quad\forall\widetilde{\tens{F}},\,\tens{H}\in\lin^+,\,\forall\tens{\Sigma}\in\sym,
\]
which is none other than ISRI.

\section{Structure of the energy density for isotropic initially stressed hyperelastic materials}
\label{sec:structure}

The goal of this section is deriving the general structure of the energy density for isotropic initially stressed hyperelastic materials. To do so, we first recall the main properties of isotropic materials.
A material is said to be \textit{isotropic} if the material symmetry group of the archetype is the whole group of rotations, namely
\[
    \tens{T}_0(\Fe\tens{R}) = \tens{T}_0(\Fe),\qquad\forall\Fe\in\lin^+,\,\tens{R}\in\orth^+.
\]
By referring to Fig \ref{fig:ref_conf}, the energy density of an isotropic, hyperelastic, material with respect to its relaxed state is usually given in terms of the left Cauchy-Green tensor $\tens{B}_\text{e}=\Fe\Fe^T $ or, equivalently, in terms of the right Cauchy-Green tensor $\tens{C}_\text{e} = \Fe^T \Fe$, as follows
\[
    f(\tens{B}_\text{e}) = f(\tens{C}_\text{e}) \coloneqq\psi_0(\Fe) \, .
\]
More specifically, thanks to the representation theorem of isotropic functions (see \cite{Gurtin_2010}, for instance), we can write
\begin{equation}
    \label{eq:energyFe}
    \psi_0(\Fe) = g(\tr(\Be),\,II(\Be),\,\det(\Be)) = g(\tr(\Ce),\,II(\Ce),\,\det(\Ce)) ,
\end{equation}
where $(\tr(\Be),\,II(\Be),\,\det(\Be))$ is the set of scalar invariants of $\Be$, collectively denoted with $\mathcal{I}_\tens{\Be}$, while $(\tr(\tens{C}_\text{e}),\,II(\tens{C}_\text{e}),\,\det(\tens{C}_\text{e}))$ is the set of scalar invariants of $\tens{C}_\text{e}$, collectively denoted with $\mathcal{I}_\tens{\tens{C}_\text{e}}$. We recall that it holds $\mathcal{I}_\tens{\Be}= \mathcal{I}_\tens{\tens{C}_\text{e}}$, being the expression of the second invariant
\[
    II(\Be)=\frac{\tr(\Be)^2-\tr(\Be^2)}{2} \, .
\]

Since our objective is to write \eqref{eq:energyFe} with respect to an initially stressed reference configuration, say $\Omega_0$ in Fig. \ref{fig:ref_conf}, we shall express its dependence upon $\tens{B}=\tens{F} \tens{F}^T$ (or $\tens{C}=\tens{F}^T \tens{F}$) and $\tens{\Sigma}$. To proceed in this direction, we  first notice that, since the response function $\tens{T}_\mathcal{F}$ is invertible by assumption, then $\tens{T}_\mathcal{F}$ is also semi-invertible \cite{Thiel_2019}. Consequently, there exists some function $\beta_j$, with $j\in\{0,\,1,\,2\}$ such that
\begin{equation}
    \label{eq:Be_inv}
    \Be = \beta_0(\mathcal{I}_\tens{T})\tens{I}+\beta_1(\mathcal{I}_\tens{T})\tens{T}+\beta_2(\mathcal{I}_\tens{T})\tens{T}^2,
\end{equation}
where $\mathcal{I}_\tens{T}$ is the set of invariants of $\tens{T}$.
In particular, by taking $\tens{F}=\tens{I}$ in \eqref{eq:Be_inv} and by noticing that $\Be =\F\tens{P}\tens{P}^T\F^T$, we get
\begin{equation}
    \label{eq:Be}
    \tens{P}\tens{P}^T = \beta_0(\mathcal{I}_\tens{\tens{\Sigma}})\tens{I}+\beta_1(\mathcal{I}_\tens{\Sigma})\tens{\Sigma}+\beta_2(\mathcal{I}_\tens{\Sigma})\tens{\Sigma}^2,
\end{equation}
where $\mathcal{I}_\tens{\Sigma}=(K_1,\,K_2,\,K_3)$ is the set of invariants of $\tens{\Sigma}$, i.e.
\[
    K_1 = \tr\tens{\Sigma},\quad K_2 = II(\tens{\Sigma}),\quad K_3=\det\tens{\Sigma}.
\]

By using \eqref{eq:Be_inv} and \eqref{eq:Be}, after some manipulations we have
\begin{gather}
    \label{eq:schifo1}
    \tr(\Be) = \beta_0 I_1 + \beta_1 I_4 + \beta_2 I_6 \, ,\\
    \label{eq:schifo2}
    \begin{gathered}
        \tr(\Be^2) =  \beta_0^2 (I_1^2-2I_2)+\beta_1^2 \tr(\tens{\Sigma}\C\tens{\Sigma}\C) + \beta_2^2 \tr(\tens{\Sigma}^2\C\tens{\Sigma}^2\C) \\
        +2\beta_0\beta_1 I_5 + 2\beta_0\beta_2 I_7 +2 \beta_1\beta_2\tr(\tens{\Sigma}\C\tens{\Sigma}^2\C) \, ,
    \end{gathered}\\
    \label{eq:schifo3}
    \det({\Be}) = I_3 \, \det(\tens{P} \tens{P}^T) \, ,
\end{gather}
where scalars $I_1, I_2, \dots, I_7$ are
\begin{equation}
    \label{eq:invariants}
    \begin{gathered}
        I_1 = \tr(\B)=\tr(\C) \, ,\qquad I_2 = II(\B)=II(\C) \, ,\qquad I_3 =\det(\B) \, ,\\
        I_4 = \tr(\tens{\Sigma}\C) \, ,\quad I_5 =\tr(\tens{\Sigma}\C^2) \, ,\quad I_6 = \tr(\tens{\Sigma}^2\C) \, ,\quad I_7 = \tr(\tens{\Sigma}^2\C^2) \, .
    \end{gathered}
\end{equation}

By noticing that $\tr(\tens{\Sigma}\C\tens{\Sigma}\C)$, $\tr(\tens{\Sigma}^2\C\tens{\Sigma}^2\C)$, and $\tr(\tens{\Sigma}\C\tens{\Sigma}^2\C)$ appearing in \eqref{eq:schifo2} can be obtained in terms of the invariants listed in  \eqref{eq:invariants} using the Cayley-Hamilton theorem (see \cite{Shams_2011}), we can write
\[
    \boxed{
        \tr({\Be}) = h_1 (I_1,\,I_4,\,I_6,\,\mathcal{I}_\tens{\Sigma})} \qquad \text{and} \qquad
    \boxed{II(\tens{B}_e) = h_2(I_1,\,I_2, \dots, I_7,\,\mathcal{I}_\tens{\Sigma})} \, .
\]
Moreover, since $\det(\tens{P} \tens{P}^T)$ can be expressed in terms of $\mathcal{I}_\tens{\Sigma}$, as reported in Appendix \ref{sec:appendix} for the sake of brevity, we have
\[
    \boxed{
        \det({\Be}) =I_3 h_3 (\mathcal{I}_\tens{\Sigma}) \, .  }
\]
Therefore, we have shown that the strain energy density of initially stressed hyperelastic materials simply depends on a set of independent invariants $I_1, I_2, \dots, I_7$ and $\mathcal{I}_\tens{\Sigma}$.
To sum up, we are able to write the general expression of $\widehat{\psi}(\tens{F},\,\tens{\Sigma})$ as
\[
    \widehat{\psi}(\tens{F},\,\tens{\Sigma}) = g(h_1(I_1,\,I_4,\,I_6,\,\mathcal{I}_\tens{\Sigma}),\,h_2(I_1,\,I_2, \dots, I_7,\,\mathcal{I}_\tens{\Sigma}),\,I_3 h_3 (\mathcal{I}_\tens{\Sigma})).
\]
We remark that $h_1$,\,$h_2$, and $h_3$ are not arbitrarily chosen but arise from \eqref{eq:schifo1}-\eqref{eq:schifo3} and the computations shown in Appendix~\ref{sec:appendix}.

In the following, we specialise such a constitutive description to the case of incompressible materials.

\section{The incompressible limit}
\label{sec:incompressible}

\subsection{Deviatoric--volumetric splitting of the stress tensors}

A material is said to be \textit{incompressible} if it cannot undergo deformations which result in local volume changes. Accordingly, the following condition holds
\begin{equation}
    \label{eq:incompressibility}
    J=\det\tens{F}=1 \, ,
\end{equation}
usually known as the \emph{incompressibility constraint}. We also introduce the deviatoric and volumetric part of the of the Cauchy stress tensor $\tens{T}$ as
\begin{equation} \label{eq:dev/vol}
    \tens{T}_\text{v} =\frac{\tr \tens{T}}{3} \, \tens{I} \, \qquad \text{and} \qquad \tens{T}_\text{d}=\tens{T}-\tens{T}_\text{v} \, ,
\end{equation}
where the quantity
\begin{equation}
    \label{eq:pressure}
    p=-1/3 \tr\tens{T} \, ,
\end{equation}
can be identified as a \emph{hydrostatic pressure}. While $\tens{T}_\tens{d}$ shall be constitutively provided, the pressure $p$ is a reactive term that can be regarded as a Lagrange multiplier enforcing the incompressibility constraint \eqref{eq:incompressibility}, in a sense that will be clarified later.

In the incompressible limit, the pressure expends no power during motion \cite{Gurtin_2010}. To see this, we introduce the spatial velocity $\vect{v}$ and its gradient $\tens{L}=\grad\vect{v}$, where $\grad$ and $\diver$ are the spatial gradient and spatial divergence operator, respectively. By denoting with $\tens{D}$ the symmetric part of $\tens{L}$, and by recalling that the expended power density can be computed as $(\tens{T}_\text{v}+\tens{T}_\text{d}):\tens{D}$, we obtain, for the volumetric part,
\[
    \tens{T}_\text{v}:\tens{D} = - p \tens{I}:\tens{D} = - p \diver \vect{v}=0 \, ,
\]
where we have used the fact that the incompressibility constraint \eqref{eq:incompressibility} implies $\diver\vect{v}=0$ \cite{Gurtin_2010}. Consequently, the pressure does not produce any work during motion and so cannot contribute to the elastic strain energy of an incompressible body.

Focusing now on initially stressed incompressible materials, we can also apply the volumetric-deviatoric splitting \eqref{eq:dev/vol} to the initial stress $\tens{\Sigma}$, the latter being  the stress tensor in the undeformed reference configuration, i.e. $\tens{\Sigma}=\left.\tens{T}\right|_{\tens{F}=\tens{I}}$. In this way, we can identify with $\Sigmad$ and $\Sigmav$ the deviatoric and volumetric part of $\tens{\Sigma}$, respectively as
\begin{equation}
    \label{eq:sigma_dv}
    \Sigmav = \frac{1}{3}(\tr\tens{\Sigma})\tens{I} \qquad \text{and} \qquad \Sigmad =\tens{\Sigma}-\Sigmav \, .
\end{equation}

By specializing the reasoning on $\tens{T}$ to $\tens{\Sigma}$, if we assume the initial stress be generated by elastic distortion, its volumetric part $\Sigmav$ does not contribute to the elastic energy. Accordingly, we take the strain energy density with respect to the initially stressed configuration to be a function of the deviatoric part of $\tens{\Sigma}$ and of the isochoric part of the deformation gradient $\oF = J^{-1/3}\tens{F}$, that is

\begin{equation}
    \label{eq:psibar}
    \boxed{
        \psi = \overline{\psi}(\oF,\,\Sigmad)}\, .
\end{equation}

One may observe that taking $\overline{\psi}$ as a function of $J^{-1/3}\tens{F}$ may seem useless since $J=1$. However, such a choice has two motivations.
\begin{enumerate}
    \item The first one is of theoretical nature: if we consider the Piola-Kirchhoff stress tensor $\tens{S}$, its deviatoric and volumetric parts read, respectively as
          \[
              \tens{S}_\text{d} = J \tens{T}_\text{d}\tens{F}^{-T}\qquad \text{and} \qquad \tens{S}_\text{v} = J \tens{T}_\text{v}\tens{F}^{-T},
          \]
          which, can also be expressed in terms of \eqref{eq:psibar} as follows  \cite{holzapfel2000nonlinear}
          \begin{equation} \label{eq:dev_vol_1stpiola}
              \tens{S}_\text{d} = \frac{\partial\overline{\psi}}{\partial\tens{F}} \qquad \text{and} \qquad {S_\text{d}}_{ij} = \frac{\partial\overline{\psi}}{F_{ij}}.
          \end{equation}
          In contrast, for a general $\widehat{\psi}(\tens{F},\,\tens{\Sigma})$
          \[
              \tens{S}_\text{d}\neq \frac{\partial\widehat{\psi}}{\partial\tens{F}}.
          \]
          Therefore, the introduction of $\oF$ allows us to separate the volumetric and the isochoric parts of the stress tensors in a natural way. This aspect will be useful in the following.

    \item The second reason is of numerical nature. When using the finite element method, it has been shown that the volumetric-deviatoric splitting of the stress tensor of the form \eqref{eq:sigma_dv} results in a more stable numerical formulation that reduces locking effects in incompressible materials. We will show that using \eqref{eq:psibar} improves the performances of the numerical algorithms for incompressible initially stressed materials as well.
\end{enumerate}

\subsection{Isotropic incompressible materials}
By following closely the reasoning of Section~\ref{sec:structure}, it is possible to write the energy density of an incompressible and isotropic initially stressed material as a function of a finite set of scalar invariants. Specifically, by starting with a general strain energy in the form \eqref{eq:psibar} along with the assumption of material isotropy, the resulting set of independent invariants yields

\textbf{\begin{equation}
        \label{eq:barredinvariants}
        \bar{I}_1,\,\bar{I}_2,\,\bar{I}_4,\,\bar{I}_5,\,\bar{I}_6,\,\bar{I}_7,\,\overline{K}_2,\,\overline{K}_3 \, ,
    \end{equation}}
where $\bar{I}_1, \, \bar{I}_2, \,  \dots , \bar{I}_7$ read
\begin{equation}
    \label{eq:Ijbar}
    \begin{gathered}
        \bar{I}_1=J^{-2/3} I_1, \qquad \bar{I}_2=J^{-4/3}I_2, \qquad\bar{I}_4 = J^{-2/3}\tr(\Sigmad \tens{C}),\\
        \bar{I}_5 = J^{-4/3}\tr(\Sigmad \tens{C}^2), \qquad\bar{I}_6 = J^{-2/3}\tr(\Sigmad^2 \tens{C}) ,\qquad\bar{I}_7 = J^{-4/3}\tr(\Sigmad^2 \tens{C}^2),
    \end{gathered}
\end{equation}
while $\overline{K}_2$ and $\overline{K}_3$ are the principal invariants of $\tens{\Sigmad}$, i.e.
\begin{equation}
    \label{eq:Kjbar}
    \overline{K}_2 = II(\Sigmad) = -\frac{\tr\Sigmad^2}{2} \qquad \text{and} \qquad \overline{K}_3 = \det\Sigmad \, .
\end{equation}

Compared with previous models (see \cite{Shams_2011,Gower_2015} for instance), such a formulation allows us to reduce the number of independent invariant by one since $\overline{K}_1 = \tr\Sigmad$ is zero by construction. To provide an example, we now derive the expression of the strain energy density for an incompressible initially stressed neo-Hookean material using the set of invariants \eqref{eq:barredinvariants}.

\subsection{Incompressible initially stressed neo-Hookean materials}

To start with, we consider the usual strain energy density of a neo-Hookean material with respect to the stress-free state depicted in Fig. \ref{fig:ref_conf}

\begin{equation}
    \label{eq:NHprestress}
    \psi_0(\Fe) =\frac{\mu}{2}(J_\text{e}^{-2/3}\tr\tens{B}_\text{e}-3) \, ,
\end{equation}
where $J_\text{e}=\det\tens{F}_\text{e}$. The resulting Piola-Kirchhoff stress is
\[
    \tens{S} = \frac{\partial\psi_0}{\partial\Fe}\tens{P}^{T} - p \tens{F}^{-T} = \mu J_\text{e}^{-2/3}\left(\Fe-\frac{1}{3}(\tr\tens{B}_\text{e})\Fe^{-T}\right)\tens{P}^{T} - p \tens{F}^{-T} \, .
\]

By using \eqref{eq:Cauchy} along with the condition $J_\text{e}=1$, the Cauchy stress tensor results
\[
    \tens{T} = \mu\left(\Be-\frac{1}{3}(\tr\Be)\tens{I}\right) - p \tens{I} \ ,
\]
so that its deviatoric and the volumetric parts yield
\[
    \tens{T}_d = \mu\left(\Be-\frac{1}{3}(\tr\tens{B}_\text{e})\tens{I}\right), \qquad\tens{T}_v = - p \tens{I}.
\]

To get the expression of $\tens{\Sigma}$ in the configuration $\Omega_0$, we enforce the compatibility condition \eqref{eq:ISCC}, thus getting
\begin{equation}
    \label{eq:comp_nh}
    \tens{\Sigma} = \mu\left(\tens{A}-\frac{1}{3}(\tr\tens{A})\tens{I}\right) - p_\tens{\Sigma}\tens{I},
\end{equation}
where $\tens{A} = \tens{P}\tens{P}^{T}$, while $p_\tens{\Sigma}$ is the initial hydrostatic pressure. By splitting the initial stress into its deviatoric and volumetric parts, we get
\begin{align}
    \label{eq:comp_sv}
     & \Sigmad = \mu\left(\tens{A}-\frac{1}{3}(\tr\tens{A})\tens{I}\right), \\
     & \Sigmav = - p_\tens{\Sigma}\tens{I}.
\end{align}

To derive the strain energy density in the form \eqref{eq:psibar}, we shall invert \eqref{eq:comp_sv} and express $\tens{A}$ as a function of the initial stress. Since $\tens{A}$ is symmetric, we can project \eqref{eq:comp_sv} along its principal directions, which coincide with the principal directions of $\Sigmad$. By denoting with $\sigma_j$ ($j=1,\,2,\,3$) the eigenvalues of $\Sigmad$, and with $\alpha_j$ ($j=1,\,2,\,3$) the eigenvalues of $\tens{A}$, \eqref{eq:comp_sv} rewrites as
\begin{equation}
    \label{eq:sigmas}
    \sistema{
        &\sigma_1 = \mu \left(\alpha_1-\frac{\alpha_1 +\alpha_2 +\alpha_3}{3}\right),\\
        &\sigma_2 = \mu \left(\alpha_2-\frac{\alpha_1 +\alpha_2 +\alpha_3}{3}\right),\\
        &\sigma_3 = \mu \left(\alpha_3-\frac{\alpha_1 +\alpha_2 +\alpha_3}{3}\right).\\
    }
\end{equation}
Since $\tens{A}$ is positive definite and $\det\tens{A} = (\det\tens{P})^{2}=1$, we have
$\alpha_j>0$ and
\begin{equation}
    \label{eq:incalpha}
    \alpha_1\alpha_2\alpha_3=1.
\end{equation}
We can subtract the second and the third equation of \eqref{eq:sigmas} from the first one, so that
\begin{equation}
    \label{eq:diff_sigma}
    \sistema{
        &\sigma_1 -\sigma_2 = \mu \left(\alpha_1-\alpha_2\right),\\
        &\sigma_1-\sigma_3 = \mu \left(\alpha_1-\alpha_3\right).\\
    }
\end{equation}
We can now use the following Proposition.

\begin{proposition}
    \label{prop:1}
    For all $t_{12},\,t_{13}\in\mathbb{R}$, the algebraic system of equations
    \begin{equation}
        \label{eq:sistemino}
        \sistema{
        &y_1-y_2=t_{12} \, ,\\
        &y_1-y_3=t_{13} \, ,\\
        &y_1y_2y_3=1  \, ,
        }
    \end{equation}
    has one and only one solution such that $y_j>0$ for $j=1,\,2,\,3$. In particular, $y_1$ is the maximal real root of the polynomial
    \[
        f(y_1) =  y_1(y_1-t_{12})(y_1-t_{13})-1 \, .
    \]
\end{proposition}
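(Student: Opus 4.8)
The plan is to eliminate two of the three unknowns and reduce the system to a single scalar equation, then analyse the resulting cubic. From the first two equations of \eqref{eq:sistemino} I would solve $y_2 = y_1 - t_{12}$ and $y_3 = y_1 - t_{13}$, substitute these into the constraint $y_1 y_2 y_3 = 1$, and thereby recover exactly $f(y_1) = y_1(y_1-t_{12})(y_1-t_{13}) - 1 = 0$. This establishes a bijection between solutions of the system and roots of $f$: every solution yields a root, and each root determines $y_2, y_3$ uniquely. The whole question thus becomes a counting problem for the roots of $f$ that produce a positive triple.

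The positivity requirements $y_1 > 0$, $y_2 > 0$, $y_3 > 0$ translate into the single condition $y_1 > M$, where $M := \max\{0,\,t_{12},\,t_{13}\}$. I would therefore restrict attention to the half-line $(M,\,+\infty)$. The central analytic observation is that on this interval each of the three factors $y_1$, $y_1 - t_{12}$, $y_1 - t_{13}$ is strictly positive and strictly increasing, so their product $g(y_1) := y_1(y_1-t_{12})(y_1-t_{13})$ is strictly increasing there, since $g'$ is a sum of products of positive terms. Hence $f = g - 1$ is strictly increasing on $(M,\,+\infty)$, which immediately delivers uniqueness of a positive solution.

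For existence I would evaluate $f$ at the endpoint $M$: whichever of $0$, $t_{12}$, $t_{13}$ attains the maximum forces one of the three factors to vanish, so $g(M) = 0$ and $f(M) = -1 < 0$. Since $f(y_1) \to +\infty$ as $y_1 \to +\infty$ and $f$ is continuous and strictly increasing on $(M,\,+\infty)$, the intermediate value theorem yields a single root $y_1^\star \in (M,\,+\infty)$, and with it the unique admissible positive triple. To identify $y_1^\star$ as the maximal real root, I would use that $f$ is a cubic with positive leading coefficient, so it has at most three real roots; the computation $f(M) = -1 \neq 0$ shows $M$ itself is not a root, and monotonicity on $(M,\,+\infty)$ shows $y_1^\star$ is the only root exceeding $M$, so every other real root lies strictly below $M < y_1^\star$.

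The main obstacle — really the only nontrivial point — is establishing the strict monotonicity of $f$ on the relevant half-line cleanly, together with the bookkeeping needed to confirm that $f(M) = -1$ in all three cases for the maximum. Everything else is elimination and a standard intermediate-value argument.
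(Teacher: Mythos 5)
Your proposal is correct and follows essentially the same route as the paper's proof: eliminate $y_2,\,y_3$ from the first two equations, observe that positivity forces $y_1 > \max\{0,\,t_{12},\,t_{13}\}$, note that $f$ equals $-1$ at this threshold, and conclude existence and uniqueness by strict monotonicity of $f$ on the half-line (the paper verifies this via the explicit derivative $f'(y_1)=(y_1-t_{12})(y_1-t_{13})+y_1(y_1-t_{13})+y_1(y_1-t_{12})>0$, which is precisely your product-rule observation written out). Your closing argument identifying $y_1^\star$ as the maximal real root is a detail the paper asserts without elaboration, but it follows immediately from the same monotonicity, so the two proofs are substantively identical.
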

\begin{proof}
    Since all the unknowns $y_j$ must be positive, from the first two equations \eqref{eq:sistemino} we get the following restrictions
    \begin{equation}
        \label{eq:prop_1}
        \sistema{
        y_2=y_1-t_{12}>0,\\
        y_3=y_1-t_{13}>0.
        }
    \end{equation}
    In particular, $y_1>y_m\coloneqq\max\{0,\,t_{12},\,t_{13}\}$. By substituting \eqref{eq:prop_1} into the third equation of \eqref{eq:sistemino}, we get
    \[
        f(y_1)= y_1(y_1-t_{12})(y_1-t_{13})-1=0.
    \]
    We now observe that $f(y_m)=-1$ and $f'(y_1)=(y_1-t_{12})(y_1-t_{13})+y_1(y_1-t_{13})+y_1(y_1-t_{12})>0$ for $y_1>y_m$, therefore there exists one and only one positive solution of \eqref{eq:sistemino}.
\end{proof}

By Proposition~\ref{prop:1}, the system of equations \eqref{eq:incalpha}-\eqref{eq:diff_sigma} admits one and only one solution for $\alpha_j>0$, with $j=1,\,2,\,3$. In particular, such a solution corresponds to the maximal real root of the cubic polynomial
\begin{equation}
    \label{eq:falpha}
    f(\alpha_1)=\alpha_1\left(\alpha_1+\frac{\sigma_2-\sigma_1}{\mu}\right)\left(\alpha_1+\frac{\sigma_3-\sigma_1}{\mu}\right)-1.
\end{equation}
Furthermore, by using \eqref{eq:diff_sigma}, it is immediate to see that
\begin{equation}
    \label{eq:trA}
    \tr \tens{A} = 3 \alpha_1+\frac{\sigma_2+\sigma_3-2 \sigma_1}{\mu},
\end{equation}
so that, by substitution in \eqref{eq:falpha}, we get that $\tr\tens{A}$ can be obtained as the maximal real root of
\begin{equation}
    \label{eq:depressed_cubic}
    \xi^3+ \overline{K}_2 \xi+ \overline{K}_3-\mu^3 =0,
\end{equation}
where $\xi = \mu(\tr\tens{A})/3$. Equation \eqref{eq:depressed_cubic} is of the form $\xi^3 + \gamma \xi + \zeta = 0$, which is a depressed cubic equation. To calculate its roots, we introduce the discriminant $\Delta$, defined as
\[
    \Delta = \frac{\zeta^2}{4}+\frac{\gamma^3}{27}.
\]
Based on the actual value of $\Delta$ we have two cases.
\begin{itemize}
    \item \textbf{Case I}: $\Delta>0$. We have one real root of \eqref{eq:depressed_cubic}, given by the del Ferro-Tartaglia formula
          \[
              \xi = \sqrt[3]{-\frac{\zeta}{2}+\sqrt{\Delta}}+\sqrt[3]{-\frac{\zeta}{2}-\sqrt{\Delta}}
          \]
    \item \textbf{Case II}: $\Delta \leq 0$. Let $\phi$ be the argument of the complex number $-\zeta/2+i\sqrt{-\Delta}$. Then the three real roots of \eqref{eq:depressed_cubic} are given by
          \[
              \xi_j = 2 \sqrt{-\frac{\gamma}{3}}\cos \frac{\phi+2j\pi}{3},\qquad\text{with }j=0,\,1,\,2,
          \]
where the only acceptable solution $\xi$ is given by $\xi=\max\{\xi_1,\,\xi_2,\,\xi_3\}$.
\end{itemize}

To conclude the derivation, we observe that
\begin{equation}
    \label{eq:trfe}
    \tr\Be = \tr(\tens{A}\tens{C}).
\end{equation}
We then multiply equation \eqref{eq:comp_sv} by $\tens{C}$
\begin{equation}
    \label{eq:sc}
    \mu\tens{A}\tens{C} = \Sigmad\tens{C} + \xi\tens{C},
\end{equation}
whose trace is

\begin{equation} \label{eq:trBe}
    \mu \tr\Be=\mu\tr(\tens{A}\tens{C}) = \tr(\Sigmad\tens{C}) + \xi(\tr\tens{C}).
\end{equation}
Therefore, by substitution of \eqref{eq:trBe} into \eqref{eq:NHprestress} we can calculate the strain energy density in terms of the invariants \eqref{eq:barredinvariants}, that is

\begin{equation}
    \label{eq:strainenergy}
    \psi = \overline{\psi}(\oF,\,\Sigmad) = \frac{1}{2}( \xi \overline{I}_1+\overline{I}_4-3\mu),
\end{equation}
and the associated Piola-Kirchhoff stress tensor
\[
    \tens{S} = J^{-2/3}\left( \xi \tens{F}+\tens{F}\Sigmad\right)-\left(\frac{1}{3}\left(\xi\overline{I}_1+\overline{I}_4\right)+J^{-2/3}p\right)\tens{F}^{-T}.
\]

The strain energy density given in \eqref{eq:strainenergy} closely resembles the structure of the strain energy initially proposed by Gower and co-authors (see Eq. (3.11) in \cite{Gower_2015}). However, there are notable differences. In \cite{Gower_2015}, the term $\mathring{p}$, representing the pressure field in the reference configuration, is used instead of $\xi$. The equation for $\xi$ is somewhat simpler, as it is a depressed cubic equation compared to the equation for $\mathring{p}$. Additionally, Proposition~\ref{prop:1} enables us to rigorously identify the unique acceptable solution among the three roots of $\xi$.\\
Furthermore, the use of barred invariants given in \eqref{eq:barredinvariants} facilitates a more efficient finite element formulation, as demonstrated in the following sections. Before presenting the numerical results, we specialize the proposed theoretical framework to plane strain deformations.
\subsection{Plane strain}
\label{sec:plane_strain}
We now assume plane strain deformations both during the generation of the initial stress and for the subsequent elastic deformation.
Thus, let $(\vect{e}_1,\,\vect{e}_2,\,\vect{e}_3)$ be an orthonormal right-handed vector basis, we assume that
\begin{equation}
    \label{eq:plane_strain}
    F_{13}=F_{23}=F_{31}=F_{32}=P_{13}=P_{23}=P_{31}=P_{32}=0,\quad F_{33}=P_{33}=1,
\end{equation}
where $M_{ij}=\vect{e}_i\cdot\tens{M}\vect{e}_j$ and $\tens{M}$ is a second order tensor. Under this assumption, from \eqref{eq:comp_nh} it results that $\Sigma_{13}= \Sigma_{23}=0$. Moreover, we get that one of the eigenvalues of $\tens{A}$, say $\alpha_3$, is equal to one.
Therefore, \eqref{eq:incalpha} reduces to $\alpha_1\alpha_2=1$, which by \eqref{eq:diff_sigma} becomes
\begin{equation}
    \label{eq:incalphaplane}
    \alpha_1\left(\alpha_1+\frac{\sigma_2-\sigma_1}{\mu}\right)=1.
\end{equation}
Without loss of generality, we can take $\sigma_1>\sigma_2$, so that the only admissible solution of \eqref{eq:incalphaplane} is given by
\[
    \alpha_1=\frac{\sqrt{4 \mu ^2+(\sigma_1-\sigma_2)^2}+\sigma_1-\sigma_2}{2 \mu }.
\]
Thus, we get
\begin{equation}
    \label{eq:xi_2D}
    \xi = \frac{\mu}{3}\tr\tens{A} = \frac{\sqrt{4 \mu ^2+(\sigma_1-\sigma_2)^2}+\mu }{3 }.
\end{equation}
The principal deviatoric stresses rewrite from \eqref{eq:sigmas} as
\begin{equation} \label{eq:sigmas_plane}
    \sistema{
        &\sigma_1 = \frac{1}{6} \left(3 (\sigma_1 - \sigma_2 ) - 2 \mu + \sqrt{4 \mu ^2+(\sigma_1-\sigma_2)^2}\right),\\
        &\sigma_2 =\frac{1}{6} \left( - 3 (\sigma_1 - \sigma_2 ) - 2 \mu + \sqrt{4 \mu ^2+(\sigma_1-\sigma_2)^2}\right) ,\\
        &\sigma_3 = \frac{1}{3} \left(2 \mu -\sqrt{4 \mu ^2+(\sigma_1-\sigma_2)^2}\right).
    }
\end{equation}
Note that both $\xi$ and $\Sigmad$ can be expressed as a function of $\sigma_1 - \sigma_2$, as can be seen from \eqref{eq:xi_2D} and \eqref{eq:sigmas_plane}. Such a quantity can be computed from the invariants of the initial stress by introducing $\tens{\Sigma}^\parallel$ as the restriction of $\tens{\Sigma}$ to the plane spanned by $(\vect{e}_1,\,\vect{e}_2)$.
We have
\begin{equation}
    \label{eq:s1s2}
    \sigma_1-\sigma_2 = \sqrt{ \left(\tr\tens{\Sigma}^\parallel\right)^2 - 4 \det\tens{\Sigma}^\parallel}.
\end{equation}

Therefore, under the plane strain conditions detailed in \eqref{eq:plane_strain}, the modified free energy \eqref{eq:strainenergy} is univocally identified given $\tens{\Sigma}^\parallel$. Operationally, after fixing $\tens{\Sigma}^\parallel$, $\xi$ and the principal deviatoric stresses $\sigma_1$, $\sigma_2$ and $\sigma_3$ are computed by means of \eqref{eq:xi_2D}, \eqref{eq:sigmas_plane} and \eqref{eq:s1s2}. Furthermore, the principal stresses of $\tens{\Sigma}^\parallel$ yield
\[
    \Sigma_{1,2} = \frac{1}{2} \left( \tr\tens{\Sigma}^\parallel \pm \sqrt{ \left(\tr\tens{\Sigma}^\parallel\right)^2 - 4 \det\tens{\Sigma}^\parallel} \right) \, ,
\]so that the value of the resulting initial pressure $p_{\tens{\Sigma}}$ can be calculated as
\[
    p_{\tens{\Sigma}} = \sigma_1 - \Sigma_1 \quad \text{or, equivalently,} \qquad p_{\tens{\Sigma}} = \sigma_2 - \Sigma_2.
\]
The non-vanishing components of $\Sigmad$ thus yield
\[
    \Sigma^d_{11} = \Sigma_{11} + p_{\tens{\Sigma}} \, , \, \Sigma^d_{22} = \Sigma_{22} + p_{\tens{\Sigma}} \, , \, \Sigma^d_{12} = \Sigma_{12}  \, , \, \Sigma^d_{33} = \Sigma_3.
\]
As a test, we propose the analysis of the bending of a rectangle subject to a plane initial stress.

\section{Bending of an initially stressed rectangular block}
\label{sec:bending}
In this Section, we construct an analytical solution corresponding to a pure bending of a rectangular block. Such a solution will be used as a benchmark for the numerical implementation of the formulation proposed in Section \ref{sec:incompressible} to show the advantages of the volumetric-deviatoric splitting.

\subsection{Analytical solution}

As depicted in Fig. \ref{fig:geometryBC}, we consider as a reference configuration the set
\[
    \Omega_0 = \left\{\vect{X}=(X,\,Y,\,Z)\in\mathbb{E}^3\;|\;-L/2\leq X\leq L/2,\,0\leq Y\leq H,\,0\leq Z\leq 1\right\},
\]
where $X,\,Y,\,Z$ are the Cartesian coordinates with respect to an orthonormal right-handed vector basis $(\vect{e}_X, \,\vect{e}_Y, \, \vect{e}_Z )$. The current configuration is instead described by means of a cylindrical system $(r,\,\theta,\,z)$ with $(\vect{e}_r,\,\vect{e}_\theta,\,\vect{e}_z)$ indicating the corresponding vector basis. We further assume plane strain deformations as discussed in Section~\ref{sec:plane_strain}, so that $z=Z$. Boundary conditions are specified as
\[
    \left\{
    \begin{aligned}
         & \theta = 0                              &  & \text{for }Y=0,     \\
         & \theta = \alpha                         &  & \text{for }Y=H,     \\
         & \tens{S}\vect{e}_Y = \vect{0}           &  & \text{for }X=-L/2,\,L/2, \\
         & \vect{e}_{r}\cdot\tens{S}\vect{e}_X = 0 &  & \text{for }Y=0,\,H.
    \end{aligned}
    \right.
\]

Following \cite{ogden1997non}, we then look for a homogeneous solution having the following form
\[
    r = f(X),\quad \theta=g(Y) = \frac{\alpha Y}{H},
\]
which represents a homogeneous bending of $\Omega_0$.
The deformation gradient thus reads
\begin{gather}
    \label{eq:Fbend}
    \tens{F}=f'(X)\vect{e}_r\otimes\vect{e}_X+f(X)g'(Y)\vect{e}_\theta \otimes\vect{e}_Y + \vect{e}_z\otimes\vect{e}_Z\\
    =f'(X)\vect{e}_r\otimes\vect{e}_X+\frac{\alpha f(X)}{H}\vect{e}_\theta\otimes\vect{e}_Y+\vect{e}_z\otimes\vect{e}_Z \, . \nonumber
\end{gather}
From the incompressibility constraint \eqref{eq:incompressibility}, we get $f'(X)f(X) = H / \alpha$, which gives
\[
    \begin{gathered}
        r  = f(X) = \sqrt {c_1 + \frac{2 L X}{\alpha}},\\  \qquad r_A = f (-L/2) = \sqrt{c_1 - \frac{L H}{\alpha}}, \qquad r_B = f (L/2) = \sqrt {c_1 + \frac{L H}{\alpha}},
    \end{gathered}
\]
where $c_1$ is a constant to be fixed by the boundary conditions once the material behaviour is specified.

In order to proceed, we need to prescribe some constitutive assumptions on the material response and the initial stress state.
We assume that the block is composed of an incompressible, initially stressed, neo-Hookean material, where the strain energy density is given by \eqref{eq:strainenergy}. As an example, we assume an axial initial stress of the form
\[
    \tens{\Sigma}^{\parallel} = \Sigma_{YY}(X)\vect{e}_Y\otimes\vect{e}_Y.
\]
We immediately have that $\Diver \tens{\Sigma} = \vect{0}$. From \eqref{eq:xi_2D} and \eqref{eq:s1s2} we get
\[
    \begin{aligned}
         & \xi=\frac{1}{3} \left(\mu +\sqrt{4 \mu ^2+\Sigma_{YY} ^2}\right) \, ,                                          \\
         & p_{\tens{\Sigma}} = - \frac{\Sigma_{YY} }{2} - \frac{\mu}{3} + \frac{1}{6} \sqrt{4 \mu^2 + \Sigma_{YY}^2} \, , \\
         & \sigma_3 = \mu - \xi \,.
    \end{aligned}
\]
Therefore, we get the following expression for the deviatoric part of the initial stress tensor $\Sigmad$
\[
    \Sigmad = p_\tens{\Sigma} \vect{e}_X\otimes\vect{e}_X + ( \Sigma_{YY} + p_\tens{\Sigma} ) \vect{e}_Y \otimes\vect{e}_Y + \sigma_3 \vect{e}_Z \otimes\vect{e}_Z \, .
\]
By \eqref{eq:Fbend}, the left Cauchy-Green tensor results
\[
    \tens{B} = \tens{F} \tens{F}^T = \frac{H^2}{r^2 \, \alpha^2} \vect{e}_r\otimes\vect{e}_r + \frac{r^2 \, \alpha^2 }{H^2} \vect{e}_\theta \otimes\vect{e}_\theta + \vect{e}_z\otimes\vect{e}_z,
\]
so the invariants $\bar{I}_1$ and $\bar{I}_4$ take the following expressions
\begin{equation}   \label{eq:r0_r1}
    \bar{I}_1 = \frac{H^2}{r^2 \, \alpha^2} + \frac{r^2 \, \alpha^2 }{H^2} + 1  \qquad \text{and} \qquad \bar{I}_4 = p_\tens{\Sigma}\frac{H^2}{r^2 \, \alpha^2}  + ( \Sigma_{YY} + p_\tens{\Sigma} ) \frac{r^2 \, \alpha^2 }{H^2} + \sigma_3.
\end{equation}

The balance of linear momentum in the current configuration reads $\diver\tens{T} = \vect{0}$, which in cylindrical coordinates and under the kinematic assumption of this section reduces to the scalar equations
\begin{align}
    \label{eq:eq:radial1}
     & \frac{\partial T_{rr}}{\partial r}+\frac{1}{r}(T_{rr}-T_{\theta\theta})=0,    \\
     & \frac{\partial T_{\theta \theta}}{\partial \theta} = 0, \label{eq:eq:radial2} \\
    \label{eq:eq:radial3}
     & \frac{\partial T_{zz}}{\partial z} = 0,
\end{align}
where the non-vanishing components of the Cauchy stress tensor yield
\begin{align}
     & T_{rr} = \left(p_\tens{\Sigma} + \xi  \right) \frac{H^2}{r^2 \, \alpha^2} -\frac{1}{3} \left( \bar{I}_4 + \xi \bar{I}_1 \right) - p , \label{eq:cauchy_rr}                                   \\
     & T_{\theta \theta } = \left(\Sigma_{YY} + p_\tens{\Sigma} + \xi \right) \frac{r^2 \, \alpha^2 }{H^2} -\frac{1}{3} \left( \bar{I}_4 + \xi \bar{I}_1 \right) - p , \label{eq:cauchy_thetatheta} \\
     & T_{zz } = \left( \sigma_3  + \xi  \right) -\frac{1}{3} \left( \bar{I}_4 + \xi \bar{I}_1 \right) - p  .
\end{align}
Notice that, owing to the kinematic assumptions and to the choice of the initial stress, Eqs. \eqref{eq:eq:radial2}-\eqref{eq:eq:radial3} are trivially satisfied. By integrating \eqref{eq:eq:radial1} and applying the boundary condition $T_{rr} (r_A) = 0$, we get
\[
    T_{rr} (r) = - \int_{r_A}^{r} \frac{1}{r} \left( T_{rr} - T_{\theta \theta} \right) \d r.
\]
A direct substitution of \eqref{eq:cauchy_rr}-\eqref{eq:cauchy_thetatheta} allows us to find the pressure $p$ as
\[
    p (r) = \left( p_\tens{\Sigma} (r) + \xi(r)  \right) \frac{H^2}{r^2 \, \alpha^2} -\frac{1}{3} \left( \bar{I}_4 (r) + \xi (r) \bar{I}_1 (r) \right) - T_{rr} (r).
\]

Finally, the constant $c_1$ can be computed by enforcing the boundary condition $T_{rr}(r_B)=0$.

\begin{figure}[t!]
    \centering
    \includegraphics[width=0.8\textwidth]{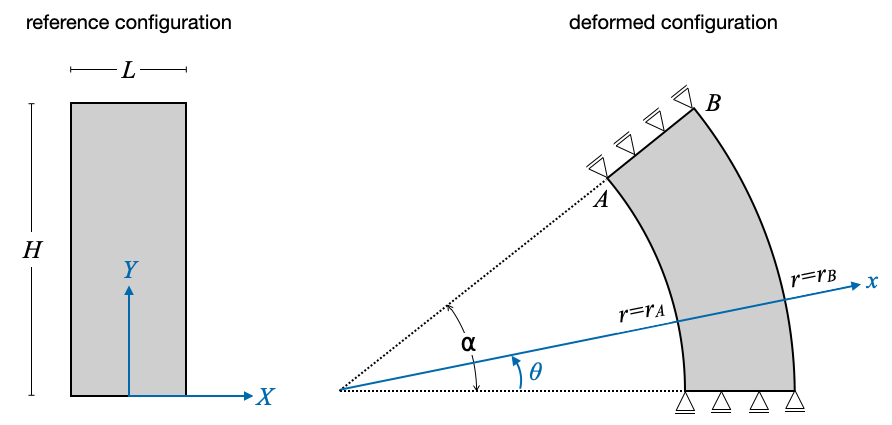}
    \caption{Schematic of the geometry and boundary conditions adopted for the bending of a rectangular block.}
    \label{fig:geometryBC}
\end{figure}

\begin{figure}[t!]
    \centering
    \includegraphics[width=1\textwidth]{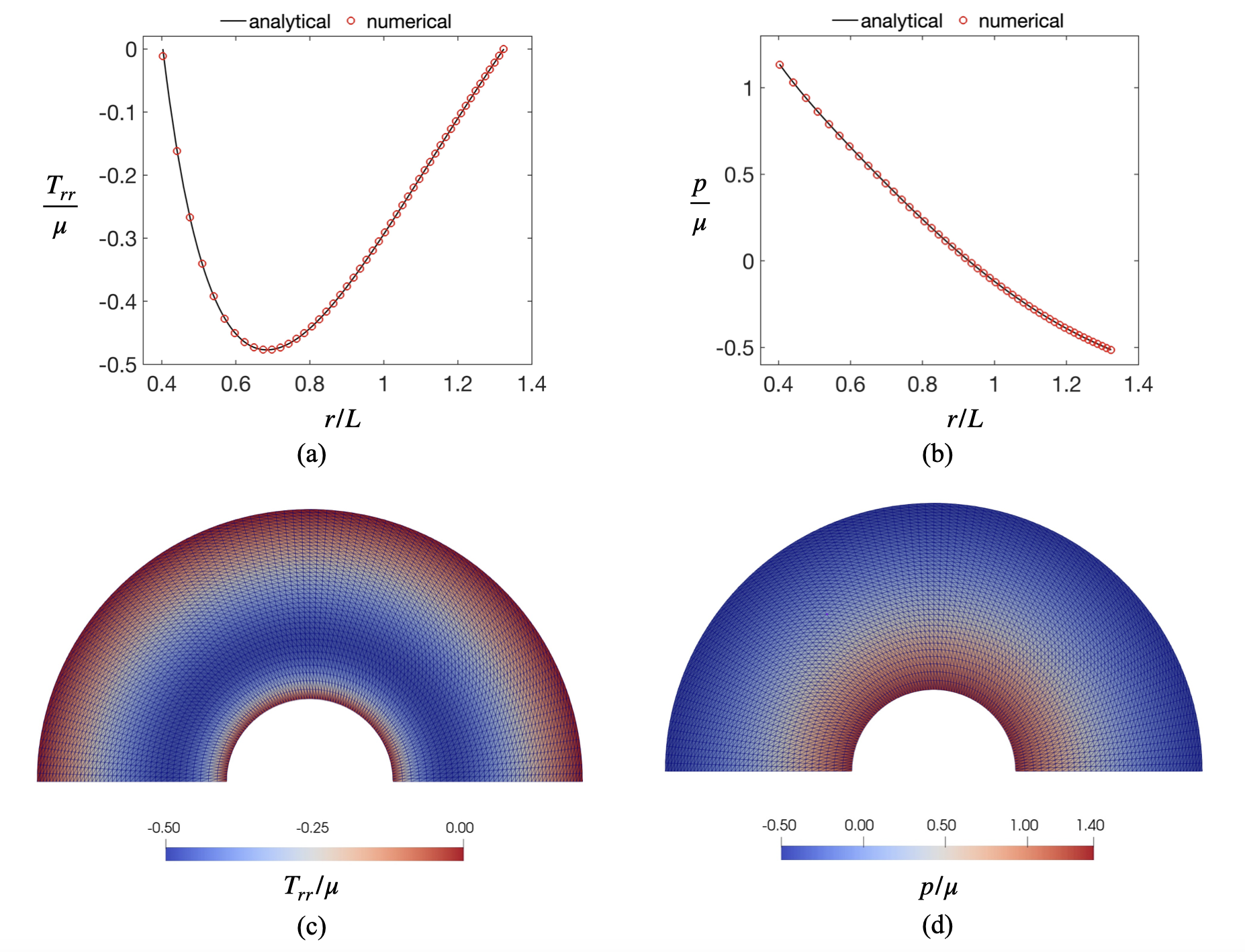}
    \caption{Bending of a rectangular block with initial stress: validation of the proposed numerical implementation using the finite element method. (a-b) Plot of the normalized radial stress $T_{rr}/\mu$ and hydrostatic pressure $p/\mu$ against the radial coordinate $r$ for both the analytical and numerical solutions. (c-d) Spatial distribution of the normalized radial stress and hydrostatic pressure in the deformed configuration for the adopted  finite element mesh.}
    \label{fig:validation}
\end{figure}

\begin{figure}[t!]
    \centering
    \includegraphics[width=1\textwidth]{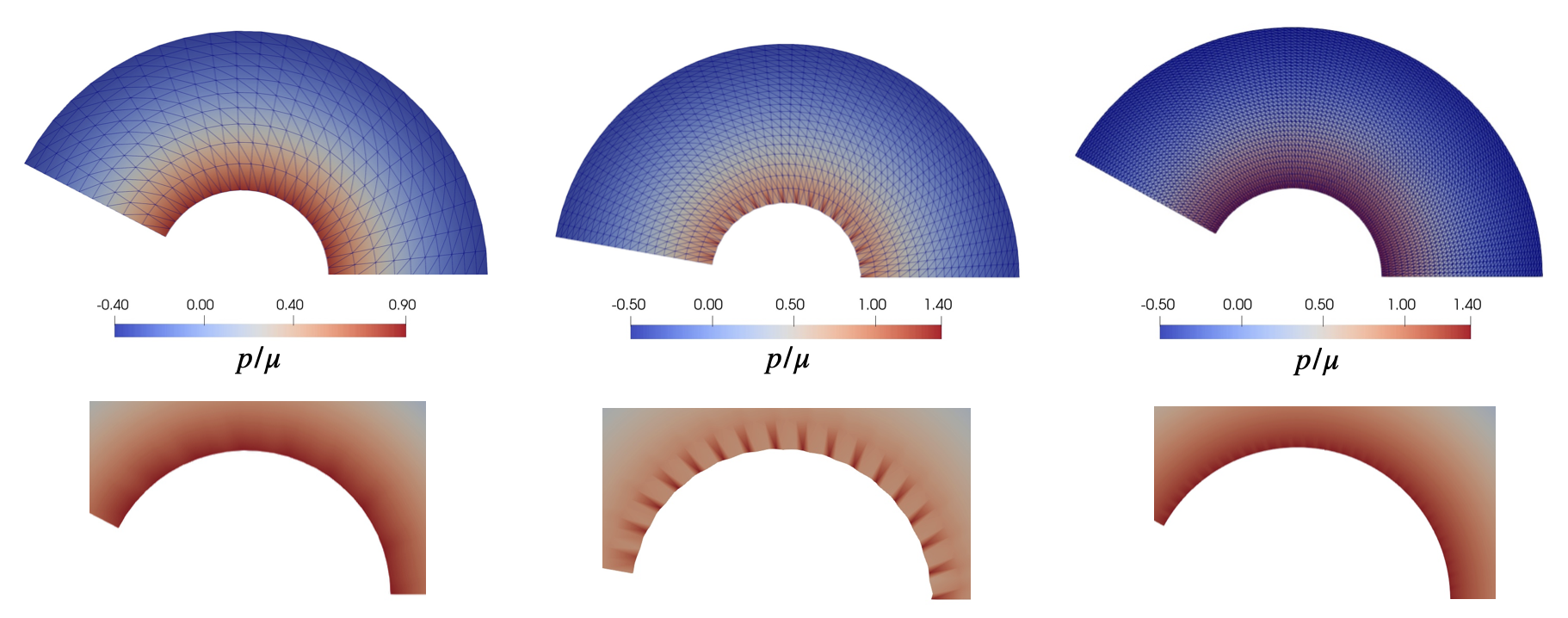}
    \caption{Bending of a rectangular block with initial stress: numerical results obtained using a standard model formulation without volumetric-deviatoric splitting. (Top) Simulated spatial distribution of the  normalized hydrostatic pressure for three different finite element meshes. The reported results refer to the last simulated load increment before the algorithm fails to converge. (Bottom) Magnification of the pressure distribution in the inner radius showing the typical mesh dependent fluctuations resulting from volumetric locking.}
    \label{fig:locking}
\end{figure}

\subsection{Finite element approximation}

In order to investigate the advantages of the volumetric-deviatoric splitting in a numerical setting, we compare the analytical solution with the one obtained by a finite element approximation of the governing equations. In particular, we implemented the model equations into the open-source computing platform \texttt{FEniCS} \cite{FenicsProject} for their numerical resolution through the finite element method. The geometry and boundary conditions of the problem are depicted in Fig. \ref{fig:geometryBC}. To handle the prescribed bending deformation, we impose the angle between the $X$-axis and side AB incrementally till its magnitude equals $\alpha$. Operationally, we impose such a constraint in a variational manner through the introduction of an \textit{ad-hoc} boundary Lagrange multiplier by using the \texttt{multiphenics} library \cite{multiphenics}. Therefore, the overall finite element formulation accounts for three independent variables, i.e. the displacement field $\boldsymbol{u}$, the hydrostatic pressure $p$, and the boundary Lagrange multiplier $\lambda$. We approximate $\boldsymbol{u}$ and $\lambda$ with second order elements and $p$ with first order elements.

In the numerical simulations, we select $L=2$, $H=5$, $\alpha=\pi$, $\mu=1$, and we choose the initial stress component
\[
    \Sigma_{YY} (X) = - \mu \frac{2 X}{L}.
\]
Figs. \ref{fig:validation}a-b show the resulting radial stress $T_{rr}$ and hydrostatic pressure $p$ against the radial coordinate $r$ for both the numerical and analytical solutions. We observe an excellent agreement between the two solutions, thus demonstrating the correctness of the proposed formulation. We also report the spatial distribution of $T_{rr}$ and $p$ in the deformed configuration in Figs. \ref{fig:validation}c-d for the adopted finite element mesh. As expected, the considered stress components distribute in space following a cylindrical symmetry.

The proposed formulation actually outperforms the conventional models for initially stressed neo-Hookean materials (see Eq. (3.11) in \cite{Gower_2015}) as shown in Fig. \ref{fig:locking}. Here, we gather the numerical outcomes for three different finite element meshes obtained using using the neo-Hookean model described by Eq. (3.11) in \cite{Gower_2015} without resorting to the volumetric-deviatoric splitting proposed in this paper. Such a formulation is unable to capture the imposed bending deformation since the algorithm fails to converge before reaching the final load increment. Lack of numerical convergence follows from spurious volumetric locking as visible in the simulated spatial distribution of the normalized hydrostatic pressure near the inner radius (see bottom of Fig. \ref{fig:locking}).

\section{Concluding Remarks}
In this study, we delved into critical aspects associated with the modelling of initially stressed materials, specifically focusing on a class of materials meeting two essential conditions:
\begin{enumerate}
    \item[A1.] We required invertibility of the response function $\tens{T}_0(\Fe)$ for the relaxed material, as elucidated in Section~\ref{sec:assumptions} (see Remark~\ref{rem:invert} for further discussion on this invertibility).
    \item[A2.] We explored materials wherein the initial stress arises from an elastic distortion.
\end{enumerate}

Our investigation highlighted the necessity for the invertibility of $\tens{T}_0(\Fe)$ to properly define $\widehat{\tens{T}}(\tens{F}, \, \tens{\Sigma})$. Moreover, we demonstrated that the commonly imposed \emph{initial stress reference independence} and \emph{initial stress compatibility condition} stem naturally from the above two assumptions. While (A1) is a prerequisite for defining $\widehat{\tens{T}}(\tens{F},\,\tens{\Sigma})$, (A2) may not hold in certain materials where stress formation is accompanied by a change in material properties, as discussed in \cite{Riccobelli_2019}.

Drawing inspiration from Hoger's work \cite{Hoger1993b}, we derived the general expression of the strain energy density $\widehat{\psi}(\tens{F},\,\tens{\Sigma})$ for materials satisfying (A1)-(A2). Notably, our analysis revealed that $\widehat{\psi}(\tens{F},\,\tens{\Sigma})$ may depend on $I_5$ and $I_7$ only if $\psi_0$ is a function of $II(\Be)$.

We then focused on incompressible media, demonstrating that the strain energy density can be expressed in terms of the deviatoric part of the initial stress tensor, $\Sigmad$, and the isochoric part of the deformation gradient, $\oF$. This volumetric-deviatoric splitting enabled a reduction of necessary invariants by one compared to existing models.

To illustrate the applicability of our approach, we applied it to an incompressible neo-Hookean material (see \eqref{eq:strainenergy}). In order to show the advantages of the volumetric-deviatoric splitting in a computational setting, we examined the bending of an initially stressed rectangular block. The numerical simulations demonstrated that the volumetric-deviatoric splitting of energy exhibits superior performance over the original formulation based on the energy proposed by \cite{Gower_2015}, eliminating locking phenomena and aligning perfectly with the analytical solution.

The outcomes of this research contribute to the modelling of initially stressed materials, including biological tissues and gels. The proposed volumetric-deviatoric splitting of the energy density holds significant potential in computational mechanics, offering a substantial enhancement to existing methodologies.

\appendix
\section{Expression of $\det \tens{P} \tens{P}^T$ in terms of $\mathcal{I}_{\tens{\Sigma}}$} \label{sec:appendix}

To start with, we conveniently indicate with $\tens{A}=\tens{P}\tens{P}^T$. By application of the Cayley-Hamilton theorem, we get
\[
    - \tens{A}^3 + \tr(\tens{A}) \, \tens{A}^2 - II(\tens{A}) + \det (\tens{A}) \tens{I} = \tens{0} \, ,
\]
and, by taking the trace of both sides along with exploiting the definition of $II(\tens{A}) $, it results

\begin{equation} \label{eq:detA}
    \det( \tens{A}) = \frac{1}{3} \left( \tr (\tens{A}^3) - \tr (\tens{A}) \, \tr (\tens{A}^2) + \tr (\tens{A}) \frac{\tr( \tens{A})^2 - \tr (\tens{A}^2)}{2}\right) \, .
\end{equation}

To express \eqref{eq:detA} in terms of $\mathcal{I}_{\tens{\Sigma}}$, we exploit \eqref{eq:Be}, from which it results, after some manipulations,
\begin{gather*}
    \tr(\tens{A}) = 3  \beta_0 + \beta_1 \tr(\tens{\Sigma}) + \beta_2 (\tr(\tens{\Sigma})^2 - 2 II(\tens{\Sigma}))  \, , \\
    \begin{gathered}
        \tr(\tens{A}^2) = 3 \beta_0^2 + (\beta_1^2+2 \beta_0 \beta_2) \left(\tr (\tens{\Sigma})^2 - 2 II (\tens{\Sigma}) \right) +\\
        + \beta_2^2 \tr (\tens{\Sigma}^4) + 2 \beta_0 \beta_1 \tr (\tens{\Sigma}) + 2 \beta_1 \beta_2 \tr(\tens{\Sigma}^3),
    \end{gathered}\\
    \tr(\tens{A}^3) = 3 \beta_0^3 + 3 \beta_0^2 \beta_1 \tr (\tens{\Sigma}) + 3 \beta_0 (\beta_1^3+\beta_2 ) \left(\tr(\tens{\Sigma})^2 - 2 II (\tens{\Sigma}) \right) + \beta_1^3 \tr (\tens{\Sigma}^3) \\
    +3 \beta_2 (\beta_0 \beta_2 + \beta_1^2) \tr(\tens{\Sigma}^4) + 3 \beta_1 \beta_2^2 \tr (\tens{\Sigma}^5) + \beta_2^3 \tr(\tens{\Sigma}^6) \, .
\end{gather*}
where $\tr(\tens{\Sigma}^3)$, $\tr(\tens{\Sigma}^4)$, $\tr(\tens{\Sigma}^5)$, and $\tr(\tens{\Sigma}^6)$ can be recast in therm of $\mathcal{I}_{\tens{\Sigma}}$ by repeatedly using the Cayley-Hamilton theorem for $\tens{\Sigma}$.

\section*{Acknowledgments}
DR gratefully acknowledge the support provided by the European Union – NextGenerationEU under the National Recovery and Resilience Plan (NRRP), Mission 4 Component 2 Investment 1.1 - Call PRIN 2022 No. 104 of February 2, 2022 of Italian Ministry of University and Research; Project 202249PF73 (subject area: PE - Physical Sciences and Engineering) ``Mathematical models for viscoelastic biological matter''. MM gratefully acknowledge the support provided by the European Commission through FSE REACT-EU funds, PON Ricerca e Innovazione.
The authors are members of the Gruppo Nazionale di Fisica Matematica – INdAM. The authors acknowledge Artur L. Gower for useful discussions.


\begin{thebibliography}{10}

\bibitem{alameen2023mechanics}
{\sc E.~H.~O. Alameen, A.~Lucantonio, and A.~DeSimone}, {\em Mechanics and
  transient morphing of soft hygroscopic bilayers}, Mechanics Research
  Communications, 131 (2023), p.~104131.

\bibitem{multiphenics}
{\sc F.~Ballarin}, {\em multiphenics website}.
\newblock https://multiphenics.github.io/, 2024.
\newblock Accessed: February 29, 2024.

\bibitem{barnafi2024reconstructing}
{\sc N.~Barnafi, F.~Regazzoni, and D.~Riccobelli}, {\em Reconstructing relaxed
  configurations in elastic bodies: Mathematical formulations and numerical
  methods for cardiac modeling}, Computer Methods in Applied Mechanics and
  Engineering, 423 (2024), p.~116845.

\bibitem{bonet1997nonlinear}
{\sc J.~Bonet and R.~D. Wood}, {\em Nonlinear continuum mechanics for finite
  element analysis}, Cambridge university press, 1997.

\bibitem{chuong1986residual}
{\sc C.-J. Chuong and Y.-C. Fung}, {\em Residual stress in arteries}, in
  Frontiers in biomechanics, Springer, 1986, pp.~117--129.

\bibitem{Ciarletta2016}
{\sc P.~Ciarletta, M.~Destrade, and A.~Gower}, {\em On residual stresses and
  homeostasis: an elastic theory of functional adaptation in living matter.},
  Scientific Reports, 6 (2016), p.~24390.

\bibitem{ciarletta2016morphology}
{\sc P.~Ciarletta, M.~Destrade, A.~L. Gower, and M.~Taffetani}, {\em Morphology
  of residually-stressed tubular tissues: Beyond the elastic multiplicative
  decomposition}, Journal of the Mechanics and Physics of Solids, 90 (2016),
  pp.~242--253.

\bibitem{damioli2022transient}
{\sc V.~Damioli, E.~Zorzin, A.~DeSimone, G.~Noselli, and A.~Lucantonio}, {\em
  Transient shape morphing of active gel plates: geometry and physics}, Soft
  Matter, 18 (2022), pp.~5867--5876.

\bibitem{de2011computational}
{\sc E.~A. de~Souza~Neto, D.~Peric, and D.~R.~J. Owen}, {\em Computational
  methods for plasticity: theory and applications}, John Wiley \& Sons, 2011.

\bibitem{du2018modified}
{\sc Y.~Du, C.~L{\"u}, W.~Chen, and M.~Destrade}, {\em Modified multiplicative
  decomposition model for tissue growth: beyond the initial stress-free state},
  Journal of the Mechanics and Physics of Solids, 118 (2018), pp.~133--151.

\bibitem{Epstein_2002}
{\sc M.~Epstein}, {\em From saturated elasticity to finite evolution,
  plasticity and growth}, Mathematics and Mechanics of Solids, 7 (2002),
  p.~255–283.

\bibitem{Epstein_2012}
{\sc M.~Epstein}, {\em The Elements of Continuum Biomechanics}, Wiley, 2012.

\bibitem{Epstein_2015}
{\sc M.~Epstein}, {\em Mathematical characterization and identification of
  remodeling, growth, aging and morphogenesis}, Journal of the Mechanics and
  Physics of Solids, 84 (2015), pp.~72--84.

\bibitem{Epstein_2001}
{\sc M.~Epstein and M.~De~León}, {\em Continuous distributions of
  inhomogeneities in liquid-crystal-like bodies}, Proceedings of the Royal
  Society of London. Series A: Mathematical, Physical and Engineering Sciences,
  457 (2001), p.~2507–2520.

\bibitem{goriely2017mathematics}
{\sc A.~Goriely}, {\em The mathematics and mechanics of biological growth},
  vol.~45, Springer, 2017.

\bibitem{Gower_2015}
{\sc A.~L. Gower, P.~Ciarletta, and M.~Destrade}, {\em Initial stress symmetry
  and its applications in elasticity}, Proceedings of the Royal Society A:
  Mathematical, Physical and Engineering Sciences, 471 (2015), p.~20150448.

\bibitem{Gower_2017}
{\sc A.~L. Gower, T.~Shearer, and P.~Ciarletta}, {\em A new restriction for
  initially stressed elastic solids}, The Quarterly Journal of Mechanics and
  Applied Mathematics, 70 (2017), pp.~455--478.

\bibitem{gower2024elastic}
{\sc A.~L. Gower, T.~Shearer, P.~Ciarletta, and M.~Destrade}, {\em The elastic
  stored energy of initially strained, or stressed, materials: restrictions and
  third-order expansions}, arXiv preprint arXiv:2403.09280,  (2024).

\bibitem{Gurtin_2010}
{\sc M.~E. Gurtin, E.~Fried, and L.~Anand}, {\em The Mechanics and
  Thermodynamics of Continua}, Cambridge University Press, 2010.

\bibitem{Hoger1985}
{\sc A.~Hoger}, {\em On the residual stress possible in an elastic body with
  material symmetry}, Archive for Rational Mechanics and Analysis, 88 (1985),
  pp.~271--289.

\bibitem{Hoger1986}
{\sc A.~Hoger}, {\em On the determination of residual stress in an elastic
  body}, Journal of Elasticity, 16 (1986), pp.~303--324.

\bibitem{Hoger1993b}
{\sc A.~Hoger}, {\em The constitutive equation for finite deformations of a
  transversely isotropic hyperelastic material with residual stress}, Journal
  of Elasticity, 33 (1993), pp.~107--118.

\bibitem{Hoger1993}
{\sc A.~Hoger}, {\em Residual stress in an elastic body: a theory for small
  strains and arbitrary rotations}, Journal of Elasticity, 31 (1993),
  pp.~1--24.

\bibitem{Hoger1996}
{\sc A.~Hoger}, {\em The elasticity tensor of a transversely isotropic
  hyperelastic material with residual stress}, Journal of Elasticity, 42
  (1996), pp.~115--132.

\bibitem{Shams_Ogden_2012}
{\sc A.~Hoger}, {\em On rayleigh-type surface waves in an initially stressed
  incompressible elastic solid}, IMA Journal of Applied Mathematics, 79 (2012),
  pp.~360–--376.

\bibitem{holzapfel2000nonlinear}
{\sc G.~A. Holzapfel}, {\em Nonlinear Solid Mechanics: a Continuum Approach for
  Engineering Science}, Wiley, 2000.

\bibitem{holzapfel2010modelling}
{\sc G.~A. Holzapfel and R.~W. Ogden}, {\em Modelling the layer-specific
  three-dimensional residual stresses in arteries, with an application to the
  human aorta}, Journal of the Royal Society Interface, 7 (2010), pp.~787--799.

\bibitem{huang2021mathematical}
{\sc R.~Huang, R.~W. Ogden, and R.~Penta}, {\em Mathematical modelling of
  residual-stress based volumetric growth in soft matter}, Journal of
  Elasticity, 145 (2021), pp.~223--241.

\bibitem{Johnson1993}
{\sc B.~Johnson and A.~Hoger}, {\em The dependence of the elasticity tensor on
  residual stress.}, Journal of Elasticity, 33 (1993), p.~145–165.

\bibitem{Johnson1995}
{\sc B.~Johnson and A.~Hoger}, {\em The use of a virtual configuration in
  formulating constitutive equations for residually stressed elastic
  materials}, Journal of Elasticity, 41 (1995), p.~177–215.

\bibitem{Johnson1998}
{\sc B.~E. Johnson and A.~Hoger}, {\em The use of strain energy to quantify the
  effect of residual stress on mechanical behavior}, Mathematics and Mechanics
  of Solids, 3 (1998), pp.~447--470.

\bibitem{koss1971thermally}
{\sc D.~A. Koss and S.~M. Copley}, {\em Thermally induced residual stresses in
  eutectic composites}, Metallurgical Transactions, 2 (1971), pp.~1557--1560.

\bibitem{li2020ultrasonic}
{\sc G.-Y. Li, A.~L. Gower, and M.~Destrade}, {\em An ultrasonic method to
  measure stress without calibration: The angled shear wave method}, The
  Journal of the Acoustical Society of America, 148 (2020), pp.~3963--3970.

\bibitem{concrete}
{\sc T.~Y. Lin and N.~H. Burns}, {\em Design of Prestressed Concrete Structures
  (Third ed.)}, John Wiley \& Sons, 1981.

\bibitem{FenicsProject}
{\sc A.~Logg, Mardal, K.A., and G.~Wells}, {\em Automated {S}olution of
  {D}ifferential {E}quations by the {F}inite {E}lement {M}ethod}, {Springer},
  2012.

\bibitem{MERODIO201343}
{\sc J.~Merodio, R.~Ogden, and J.~Rodríguez}, {\em The influence of residual
  stress on finite deformation elastic response}, International Journal of
  Non-Linear Mechanics, 56 (2013), pp.~43--49.

\bibitem{moghimi2017residual}
{\sc E.~Moghimi, A.~R. Jacob, and G.~Petekidis}, {\em Residual stresses in
  colloidal gels}, Soft Matter, 13 (2017), pp.~7824--7833.

\bibitem{NAM201688}
{\sc N.~T. Nam, J.~Merodio, R.~W. Ogden, and P.~C. Vinh}, {\em The effect of
  initial stress on the propagation of surface waves in a layered half-space},
  International Journal of Solids and Structures, 88-89 (2016), pp.~88--100.

\bibitem{Noll_1967}
{\sc W.~Noll}, {\em Materially uniform simple bodies with inhomogeneities},
  Archive for Rational Mechanics and Analysis, 27 (1967), pp.~1--32.

\bibitem{ogden1997non}
{\sc R.~W. Ogden}, {\em Non-linear elastic deformations}, Courier Corporation,
  1997.

\bibitem{Ogden_2023}
{\sc R.~W. Ogden}, {\em Change of reference configuration in nonlinear
  elasticity: Perpetuation of a basic error}, Mathematics and Mechanics of
  Solids, 28 (2023), pp.~2132--2138.

\bibitem{10.1063/1.4704566}
{\sc W.~J. Parnell, A.~N. Norris, and T.~Shearer}, {\em Employing pre-stress to
  generate finite cloaks for antiplane elastic waves}, Applied Physics Letters,
  100 (2012), p.~171907.

\bibitem{Riccobelli_2019}
{\sc D.~Riccobelli, A.~Agosti, and P.~Ciarletta}, {\em On the existence of
  elastic minimizers for initially stressed materials}, Philosophical
  Transactions of the Royal Society A: Mathematical, Physical and Engineering
  Sciences, 377 (2019), p.~20180074.

\bibitem{riccobelli2018shape}
{\sc D.~Riccobelli and P.~Ciarletta}, {\em Shape transitions in a soft
  incompressible sphere with residual stresses}, Mathematics and Mechanics of
  Solids, 23 (2018), pp.~1507--1524.

\bibitem{rodriguez1994stress}
{\sc E.~K. Rodriguez, A.~Hoger, and A.~D. McCulloch}, {\em Stress-dependent
  finite growth in soft elastic tissues}, Journal of biomechanics, 27 (1994),
  pp.~455--467.

\bibitem{Saravanan_2008}
{\sc U.~Saravanan}, {\em Representation for stress from a stressed reference
  configuration}, International Journal of Engineering Science, 46 (2008),
  p.~1063–1076.

\bibitem{Shams_2011}
{\sc M.~Shams, M.~Destrade, and R.~W. Ogden}, {\em Initial stresses in elastic
  solids: Constitutive laws and acoustoelasticity}, Wave Motion, 48 (2011),
  pp.~552--567.

\bibitem{doi:10.1073/pnas.1213353109}
{\sc T.~Stylianopoulos, J.~D. Martin, V.~P. Chauhan, S.~R. Jain,
  B.~Diop-Frimpong, N.~Bardeesy, B.~L. Smith, C.~R. Ferrone, F.~J. Hornicek,
  Y.~Boucher, L.~L. Munn, and R.~K. Jain}, {\em Causes, consequences, and
  remedies for growth-induced solid stress in murine and human tumors},
  Proceedings of the National Academy of Sciences, 109 (2012),
  pp.~15101--15108.

\bibitem{sydney2016biomimetic}
{\sc A.~Sydney~Gladman, E.~A. Matsumoto, R.~G. Nuzzo, L.~Mahadevan, and J.~A.
  Lewis}, {\em Biomimetic {4D} printing}, Nature Materials, 15 (2016),
  pp.~413--418.

\bibitem{Thiel_2019}
{\sc C.~Thiel, J.~Voss, R.~J. Martin, and P.~Neff}, {\em Do we need
  truesdell’s empirical inequalities? on the coaxiality of stress and
  stretch}, International Journal of Non-Linear Mechanics, 112 (2019),
  p.~106–116.

\bibitem{doi:10.1177/002199839302701402}
{\sc S.~White and H.~Hahn}, {\em Cure cycle optimization for the reduction of
  processing-induced residual stresses in composite materials}, Journal of
  Composite Materials, 27 (1993), pp.~1352--1378.

\bibitem{zhang2023noninvasive}
{\sc Z.~Zhang, G.-Y. Li, Y.~Jiang, Y.~Zheng, A.~L. Gower, M.~Destrade, and
  Y.~Cao}, {\em Noninvasive measurement of local stress inside soft materials
  with programmed shear waves}, Science Advances, 9 (2023), p.~eadd4082.

\bibitem{ZOBEIRY201543}
{\sc N.~Zobeiry and A.~Poursartip}, {\em 3 - the origins of residual stress and
  its evaluation in composite materials}, in Structural Integrity and
  Durability of Advanced Composites, P.~W.~R. Beaumont, C.~Soutis, and
  A.~Hodzic, eds., Woodhead Publishing Series in Composites Science and
  Engineering, Woodhead Publishing, 2015, pp.~43--72.

\bibitem{zurlo2017printing}
{\sc G.~Zurlo and L.~Truskinovsky}, {\em Printing non-euclidean solids},
  Physical Review Letters, 119 (2017), p.~048001.

\end{thebibliography}
\end{document}